\newtheorem{Theorem}{Theorem}
\newtheorem{Lemma}{Lemma}
\newtheorem{Assumption}{Assumption}
\newtheorem{Remark}{Remark}
\begin{document}

\title{Bandwidth Selection for Treatment Choice with Binary Outcomes\footnote{This work was partially supported by JSPS KAKENHI Grant Number 22K13373. I would like to thank the editor and an anonymous referee for their careful reading and comments. I would also like to thank Toru Kitagawa, Masayuki Sawada, and Kohei Yata for their helpful comments and suggestions.}}

\author{Takuya Ishihara\thanks{Graduate School of Economics and Management, Tohoku University, 41 Kawauchi, Aoba-ku, Sendai, Miyagi 980-0862, Japan. \textit{E-mail adress}: \texttt{takuya.ishihara.b7@tohoku.ac.jp}}}

\date{\today}
\maketitle

\begin{abstract}
This study considers the treatment choice problem when outcome variables are binary. We focus on statistical treatment rules that plug in fitted values based on nonparametric kernel regression and show that optimizing two parameters enables the calculation of the maximum regret. Using this result, we propose a novel bandwidth selection method based on the minimax regret criterion. Finally, we perform a numerical analysis to compare the optimal bandwidth choices for the binary and normally distributed outcomes. 
\end{abstract}

\section{Introduction}
This study examines the problem of determining whether to treat individuals based on observed covariates. A standard approach is to employ a plug-in rule that selects the treatment according to the sign of an estimate of the conditional average treatment effect (CATE). Kernel regression is a prevalent technique for estimating the CATE, and a crucial aspect of implementing this method is the decision regarding bandwidth selection. Many studies have proposed bandwidth selection approaches to solve the estimation problem (see \citet{li2007nonparametric}). However, to the best of our knowledge, there are few studies that investigate the bandwidth selection for the treatment choice problem. We propose a novel method for bandwidth selection in the treatment choice problem when dealing with binary outcome variables.

In this study, we consider the planner who want to determine whether to treat individuals with a particular covariate value based on experimental data. Following \citet{manski2004statistical, Manski2007}, \citet{stoye2009minimax, stoye2012minimax}, \citet{tetenov2012statistical}, \citet{ishihara2021evidence}, and \citet{yata2021optimal}, we focus on the minimax regret criterion to solve the decision problem. When the outcome variables are binary, the conditional distributions of the outcomes are characterized by conditional mean functions. We assume that the conditional mean functions are Lipschitz functions and show that the maximum regret can be calculated by optimizing two parameters. Based on these results, we propose a computationally tractable algorithm for obtaining the optimal bandwidth.

\citet{ishihara2021evidence} and \citet{yata2021optimal} derive the minimax regret rule in a similar setting when outcome variables are normally distributed. Using the argument of \citet{ishihara2021evidence}, the calculation of the maximum regret for a nonrandomized statistical treatment rule, which incorporates fitted values derived from nonparametric kernel regression, can be performed with ease. However, this approach relies on the normality and therefore can not be applied to binary outcomes.

\citet{stoye2012minimax} considers the statistical decision problems with binary outcomes. The setting of \citet{stoye2012minimax} is similar to that of this study, but our framework differs from two perspectives. First, we focus on the treatment choice at a particular covariate value, whereas \citet{stoye2012minimax} considers treatment assignment functions that map from the covariate support into a binary treatment status. Second, our restriction on the conditional mean functions differs from that of \citet{stoye2012minimax}. \citet{stoye2012minimax} assumes that the variations of the conditional mean functions are bounded. By contrast, our study considers the conditional mean functions as Lipschitz functions.

The remainder of this paper is organized as follows. Section 2 explains the study setting. Section 3 defines the statistical decision problem and provides a computationally tractable algorithm to obtain the optimal bandwidth. Section 4 presents a numerical analysis to compare bandwidth selections for binary and normally distributed outcomes. Section 5 concludes the study.

\section{Settings}

Suppose that we have experimental data $\{Y_i,D_i,X_i\}_{i=1}^n$, where $X_i \in \mathbb{R}^{d_x}$ is a vector of observable pre-treatment covariates, $D_i \in \{0,1\}$ is a binary indicator of the treatment, and $Y_i$ is a binary outcome. Then $Y_i$ satisfies
\begin{equation}
    Y_i \ = \ \mu(D_i,X_i) + U_i, \ \ \ \ E[U_i|D_i,X_i]=0. \label{model}
\end{equation}
This implies $Y_i|D_i,X_i \sim Ber \left( \mu(D_i,X_i) \right)$. Under unconfoundedness assumption, we can consider $\mu(1,x) - \mu(0,x)$ as the CATE.

For simplicity, we assume that $D_i$ and $X_i$ are deterministic, $D_i=1$ for $i = 1, \ldots, n_1$, and $D_i = 0$ for $i = n_1+1, \ldots, n_1+n_0$, where $n_1+n_0 = n$. We define $Y_{1,i} \equiv Y_i$ for $i = 1, \ldots, n_1$, $Y_{0,i} \equiv Y_{n_1 + i}$ for $i = 1, \ldots, n_0$, $X_{1,i} \equiv X_i$ for $i = 1, \ldots, n_1$, and $X_{0,i} \equiv X_{n_1 + i}$ for $i = 1, \ldots, n_0$. Letting $p_{1,i} \equiv \mu(1,X_{1,i})$ for $i=1,\ldots,n_1$ and $p_{0,i} \equiv \mu(0,X_{0,i})$ for $i=1,\ldots,n_0$, then $Y_{1,i} \sim Ber(p_{1,i})$ for $i=1,\ldots,n_1$ and $Y_{0,i} \sim Ber(p_{0,i})$ for $i=1,\ldots,n_0$. Additionally, we define $p_{1,0} \equiv \mu(1,0)$ and $p_{0,0} \equiv \mu(0,0)$. Then, the distributions of $\bm{Y}_1 \equiv (Y_{1,1},\ldots,Y_{1,n_1})'$ and $\bm{Y}_0 \equiv (Y_{0,1},\ldots,Y_{0,n_0})'$ are determined by the parameters $\bm{p}_1 \equiv (p_{1,0}, \ldots, p_{1,n_1})'$ and $\bm{p}_0 \equiv (p_{0,0}, \ldots, p_{0,n_0})'$.

Throughout this paper, we consider the planner seeking to determine whether to treat individuals with $X_i = x_0$ based on the data $\mathbf{D} \equiv (\bm{Y}_1, \bm{Y}_0)$ given that the parameters $\bm{p}_1 \equiv (p_{1,0}, \ldots, p_{1,n_1})'$ and $\bm{p}_0 \equiv (p_{0,0}, \ldots, p_{0,n_0})'$ are unknown, where $x_0$ is a specific value of the covariate vector. The value $x_0$ does not have to be included in the support of the covariate distribution in data. Without loss of generality, we assume that $x_0 = 0$.

Following \citet{manski2004statistical, Manski2007}, \citet{stoye2009minimax, stoye2012minimax}, \citet{tetenov2012statistical}, \citet{ishihara2021evidence}, and \citet{yata2021optimal}, we focus on the minimax regret criterion to solve the decision problem. To this end, we assume that the true parameters $\bm{p} \equiv (\bm{p}_1, \bm{p}_0)$ are known to belong to the parameter space $\mathcal{P} \equiv \mathcal{P}_1 \times \mathcal{P}_0$. We impose the following restrictions on the parameter space:
\begin{eqnarray}
\mathcal{P}_1 &\equiv & \{\bm{p}_1 \in [0,1]^{n_1+1}: |p_{1,i} - p_{1,j}| \leq C \| X_{1,i} - X_{1,j} \| \}, \nonumber \\
\mathcal{P}_0 &\equiv & \{\bm{p}_0 \in [0,1]^{n_0+1}: |p_{0,i} - p_{0,j}| \leq C \| X_{0,i} - X_{0,j} \| \}, \nonumber
\end{eqnarray}
where $X_{1,0} = X_{0,0} \equiv 0$. This assumption implies that $\mu(1,x)$ and $\mu(0,x)$ are Lipschitz functions with the Lipschitz constant $C>0$.

\section{Main Results}
\subsection{Welfare and regret}

Given a treatment choice action $\delta \in [0,1]$, we define welfare attained by $\delta$ as
\begin{equation}
W(\delta) \ \equiv \ (p_{1,0} - p_{0,0}) \cdot \delta + p_{0,0}. \label{welfare}
\end{equation}
An optimal treatment choice action given knowledge of $\bm{p}$ is
\begin{equation}
\delta^{\ast} \ \equiv \ 1 \left\{ p_{1,0} \geq p_{0,0} \right\}.
\end{equation}
Then $W(\delta^{\ast})$ is the optimal welfare that would be achievable if we knew $\bm{p}$.

Let $\hat{\delta}(\mathbf{D}) \in \{0,1\}$ be a statistical treatment rule that maps data $\mathbf{D}$ to the binary treatment choice decision. The welfare regret of $\hat{\delta}(\mathbf{D})$ is defined as
\begin{eqnarray}
R(\bm{p}, \hat{\delta}) &\equiv & E_{\bm{p}} \left[ W(\delta^{\ast}) - W(\hat{\delta}(\mathbf{D})) \right] \nonumber \\
&=& (p_{1,0} - p_{0,0}) \cdot \left( \delta^{\ast} - E_{\bm{p}} [  \hat{\delta}(\mathbf{D}) ] \right), \label{regret}
\end{eqnarray}
where $E_{\bm{p}}(\cdot)$ is the expectation with respect to the sampling distribution of $\mathbf{D}$ given the parameters $\bm{p}$.

This study focuses on the following statistical treatment rule.

%%% Assumption %%%
\begin{Assumption}\label{ass:class}
We consider the following class of statistical treatment rules $\mathcal{D}$:
\begin{eqnarray}
\mathcal{D} \equiv \left\{ \hat{\delta}_{\theta}(\mathbf{D}) \equiv 1 \left\{ \frac{\sum_{i=1}^{n_1} K(X_{1,i}/\theta) Y_{1,i}}{\sum_{i=1}^{n_1} K(X_{1,i}/\theta)} - \frac{\sum_{i=1}^{n_0} K(X_{0,i}/\theta) Y_{0,i}}{\sum_{i=1}^{n_0} K(X_{0,i}/\theta)}  \geq 0 \right\} \, : \, \theta > 0 \right\}, \label{ker_class}
\end{eqnarray}
where $K: \mathbb{R}^{d_x} \to \mathbb{R}_{+}$ denotes the kernel function.
\end{Assumption}

Assumption \ref{ass:class} implies that we focus on non-randomized statistical treatment rules that plug in the fitted values based on nonparametric kernel regression. In addition, we assume that the kernel function takes non-negative values. Our results are dependent on this condition. In (\ref{ker_class}), $\theta$ is the bandwidth of the kernel regression estimator.

The minimax regret criterion selects a statistical treatment rule that minimizes the maximum regret:
\begin{equation}
\hat{\delta}_{\mathcal{D}}^{\ast} \ = \ \text{arg} \min_{\hat{\delta} \in \mathcal{D}} \max_{\bm{p} \in \mathcal{P}} R(\bm{p}, \hat{\delta}). \nonumber
\end{equation}
Since a statistical treatment rule $\hat{\delta} \in \mathcal{D}$ is characterized by bandwidth $\theta$, the optimal bandwidth can be calculated as follows:
\begin{equation}
\theta^{\ast} \ = \ \text{arg} \min_{\theta} \max_{\bm{p} \in \mathcal{P}} R(\bm{p}, \hat{\delta}_{\theta}), \label{optimal_bw}
\end{equation}
where $\hat{\delta}_{\theta}$ is as defined in (\ref{ker_class}). In the next subsection, we describe the calculation of the optimal bandwidth $\theta^{\ast}$.

\subsection{Minimax regret rule}

The following theorem implies that we can calculate $\max_{\bm{p} \in \mathcal{P}} R(\bm{p},\hat{\delta}_{\theta})$ by optimizing two parameters.

%%% Theorem %%%
\begin{Theorem}\label{thm:main}
Under Assumption \ref{ass:class}, for any $\hat{\delta}_{\theta} \in \mathcal{D}$ we obtain
\begin{eqnarray}
\max_{\bm{p} \in \mathcal{P}} R(\bm{p},\hat{\delta}_{\theta}) &=& \max \left[ \max_{p_{1,0} > p_{0,0}} \left\{ (p_{1,0}-p_{0,0}) \cdot \left( 1 - E_{\tilde{\bm{p}}^{-}(p_{1,0}, p_{0,0})} [\hat{\delta}_{\theta}(\mathbf{D}) ] \right) \right\} ,  \right. \nonumber \\
& & \hspace{1.3in} \left. \max_{p_{1,0} < p_{0,0}} \left\{ (p_{0,0}-p_{1,0}) \cdot E_{\tilde{\bm{p}}^{+}(p_{1,0}, p_{0,0})} [  \hat{\delta}_{\theta}(\mathbf{D}) ] \right\} \right], \label{Thm_main}
\end{eqnarray}
where $\tilde{\bm{p}}^{-}(p_{1,0}, p_{0,0}) \equiv \left( \tilde{\bm{p}}_1^{-}(p_{1,0}), \tilde{\bm{p}}_0^{+}(p_{0,0}) \right)$, $\tilde{\bm{p}}^{+}(p_{1,0}, p_{0,0}) \equiv \left( \tilde{\bm{p}}_1^{+}(p_{1,0}), \tilde{\bm{p}}_0^{-}(p_{0,0}) \right)$,
\begin{eqnarray*}
\tilde{\bm{p}}_1^{-}(p) &\equiv & \left( \tilde{p}_{1,0}^{-}(p), \tilde{p}_{1,1}^{-}(p), \ldots, \tilde{p}_{1,n_1}^{-}(p) \right)' \\
&\equiv & \left( p, \max \{ p-C\| X_{1,1} \|, 0 \}, \ldots, \max \{ p-C\| X_{1,n_1} \|, 0 \} \right)', \\
\tilde{\bm{p}}_1^{+}(p) &\equiv & \left( \tilde{p}_{1,0}^{+}(p), \tilde{p}_{1,1}^{+}(p), \ldots, \tilde{p}_{1,n_1}^{+}(p) \right)' \\
&\equiv & \left( p, \min \{ p+C\| X_{1,1} \|, 1 \}, \ldots, \min \{ p+C\| X_{1,n_1} \|, 1 \} \right)', \\
\tilde{\bm{p}}_0^{-}(p) &\equiv & \left( \tilde{p}_{0,0}^{-}(p), \tilde{p}_{0,1}^{-}(p), \ldots, \tilde{p}_{0,n_0}^{-}(p) \right)' \\
&\equiv & \ \left( p, \max \{ p-C\| X_{0,1} \|, 0 \}, \ldots, \max \{ p-C\| X_{0,n_0} \|, 0 \} \right)', \\
\tilde{\bm{p}}_0^{+}(p) &\equiv & \left( \tilde{p}_{0,0}^{+}(p), \tilde{p}_{0,1}^{+}(p), \ldots, \tilde{p}_{0,n_0}^{+}(p) \right)' \\
&\equiv & \left( p, \min \{ p+C\| X_{0,1} \|, 1 \}, \ldots, \min \{ p+C\| X_{0,n_0} \|, 1 \} \right)'.
\end{eqnarray*}
\end{Theorem}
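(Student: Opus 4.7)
The plan is to split the outer maximization by the sign of $p_{1,0} - p_{0,0}$ and then, for each sign, reduce the problem to an optimization in the two variables $(p_{1,0}, p_{0,0})$ alone by solving for the worst-case values of the remaining coordinates in closed form. From (\ref{regret}), the regret vanishes whenever $p_{1,0} = p_{0,0}$; on $\{p_{1,0} > p_{0,0}\}$ it equals $(p_{1,0}-p_{0,0})(1 - E_{\bm{p}}[\hat{\delta}_\theta(\mathbf{D})])$, and on $\{p_{1,0} < p_{0,0}\}$ it equals $(p_{0,0}-p_{1,0}) E_{\bm{p}}[\hat{\delta}_\theta(\mathbf{D})]$. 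So the outer maximum equals the larger of the two one-sided maxima appearing in (\ref{Thm_main}), provided that in each sub-problem the inner maximum over the remaining coordinates of $\bm{p}$ (with $p_{1,0}, p_{0,0}$ fixed) is attained at the configurations $\tilde{\bm{p}}^{-}$ and $\tilde{\bm{p}}^{+}$, respectively.

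The key step is a stochastic-monotonicity lemma that relies crucially on $K \ge 0$. Since the weights $K(X_{1,i}/\theta)/\sum_j K(X_{1,j}/\theta)$ and $K(X_{0,i}/\theta)/\sum_j K(X_{0,j}/\theta)$ are non-negative and sum to one, the event $\{\hat{\delta}_\theta(\mathbf{D}) = 1\}$ is a pointwise non-decreasing function of each $Y_{1,i}$ and a non-increasing function of each $Y_{0,i}$. Coupling each Bernoulli to a common uniform random variable on $[0,1]$ transfers this to the expectations, yielding that $\bm{p} \mapsto E_{\bm{p}}[\hat{\delta}_\theta(\mathbf{D})]$ is non-decreasing in each $p_{1,i}$ (for $i \ge 1$) and non-increasing in each $p_{0,i}$ (for $i \ge 1$).

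With this monotonicity in hand, consider the sub-problem $p_{1,0} > p_{0,0}$ with $(p_{1,0},p_{0,0})$ fixed: $E_{\bm{p}}[\hat{\delta}_\theta(\mathbf{D})]$ is minimized by pushing each $p_{1,i}$ ($i \ge 1$) down and each $p_{0,i}$ ($i \ge 1$) up as far as the Lipschitz and $[0,1]$ constraints allow. The Lipschitz constraint against coordinate $0$ combined with the box constraint gives the coordinate-wise lower bound $\tilde{p}_{1,i}^{-}(p_{1,0})=\max\{p_{1,0}-C\|X_{1,i}\|,0\}$ and upper bound $\tilde{p}_{0,i}^{+}(p_{0,0})=\min\{p_{0,0}+C\|X_{0,i}\|,1\}$. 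A short case analysis using $\bigl|\|X_{1,i}\|-\|X_{1,j}\|\bigr| \le \|X_{1,i}-X_{1,j}\|$, together with the fact that truncation at $0$ or $1$ can only shrink pairwise differences, shows that these coordinate-wise extrema simultaneously satisfy the pairwise Lipschitz constraints, so $\tilde{\bm{p}}^{-}(p_{1,0},p_{0,0}) \in \mathcal{P}$. The analogous argument with the signs reversed handles the sub-problem $p_{1,0} < p_{0,0}$ and yields $\tilde{\bm{p}}^{+}$.

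The main obstacle will be the monotonicity step: although $\hat{\mu}_1 - \hat{\mu}_0$ is a difference of ratios, the non-negativity of $K$ is what makes $\{\hat{\delta}_\theta = 1\}$ a monotone event, and this is precisely the mechanism that collapses the infinite-dimensional worst case to an optimization in two parameters; this also explains why the assumption $K \ge 0$ cannot be dropped. The feasibility check for $\tilde{\bm{p}}^{\pm}$ is routine but must be carried out explicitly, because the Lipschitz restriction on $\mathcal{P}$ binds for all pairs $(i,j)$, not only those involving the reference point $x_0 = 0$.
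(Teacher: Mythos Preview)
Your proposal is correct and follows essentially the same route as the paper: split by the sign of $p_{1,0}-p_{0,0}$, use the monotonicity of $\hat{\delta}_\theta$ in each $Y_{1,i}$ and $Y_{0,i}$ (the paper packages this as a first-order stochastic dominance lemma, your coupling argument is equivalent), and then push each coordinate to its Lipschitz/box extreme. You are in fact more explicit than the paper on two points it glosses over---the role of $K\ge 0$ in making the decision rule monotone, and the feasibility of $\tilde{\bm{p}}^{\pm}$ under the pairwise Lipschitz constraints via the reverse triangle inequality---so nothing is missing.
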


The proof of Theorem \ref{thm:main} provides parameters that maximize the regret of $\hat{\delta}_{\theta} \in \mathcal{D}$. If $p_{1,0} > p_{0,0}$, then the regret of $\hat{\delta}_{\theta}$ is maximized at
\begin{eqnarray*}
& & \bm{p}_1 = \left( p_{1,0}, \max \{ p_{1,0}-C\| X_{1,1} \|, 0 \}, \ldots , \max \{ p_{1,0}-C\| X_{1,n_1} \|, 0 \} \right)' \ \text{and} \\
& & \bm{p}_0 = \left( p_{0,0}, \min \{ p_{0,0}+C\| X_{0,1} \|, 1 \}, \ldots , \min \{ p_{0,0}+C\| X_{0,n_0} \|, 1 \} \right)'.
\end{eqnarray*}
Similarly, if $p_{1,0} < p_{0,0}$, then the regret of $\hat{\delta}_{\theta}$ is maximized at
\begin{eqnarray*}
& & \bm{p}_1 = \left( p_{1,0}, \min \{ p_{1,0}+C\| X_{1,1} \|, 1 \}, \ldots , \min \{ p_{1,0}+C\| X_{1, n_1} \|, 1 \} \right)' \ \text{and} \\
& & \bm{p}_0 = \left( p_{0,0}, \max \{ p_{0,0}-C\| X_{0,1} \|, 0 \}, \ldots , \max \{ p_{0,0}-C\| X_{0,n_0} \|, 0 \} \right)'.
\end{eqnarray*}
Using these results, we can calculate the maximum regret when $p_{1,0}$ and $p_{0,0}$ are fixed. Hence, we can compute $\max_{\bm{p} \in \mathcal{P}} R(\bm{p},\hat{\delta}_{\theta})$ by optimizing $p_{1,0}$ and $p_{0,0}$.

In view of Theorem \ref{thm:main}, we can compute the optimal bandwidth $\theta^{\ast}$ using the following algorithm. \vspace{0.05in}
\begin{itemize}
\item[\textbf{1.}] Fix $\hat{\delta}_{\theta} \in \mathcal{D}$ and generate uniform distributed random variables $\{U_{1,i,s}\}_{i=1, \ldots, n_1, \, s = 1, \ldots, S}$ and $\{U_{0,i,s}\}_{i=1, \ldots, n_0, \, s = 1, \ldots, S}$, where $S$ is a large number. Define the following variables:
\begin{eqnarray*}
& & \tilde{Y}_{1,i,s}^{-}(p) \equiv 1 \left\{ U_{1,i,s} \leq \tilde{p}_{1,i}^{-}(p) \right\}, \ \ \tilde{Y}_{1,i,s}^{+}(p) \equiv 1 \left\{ U_{1,i,s} \leq \tilde{p}_{1,i}^{+}(p) \right\}, \\
& & \tilde{Y}_{0,i,s}^{-}(p) \equiv 1 \left\{ U_{0,i,s} \leq \tilde{p}_{0,i}^{-}(p) \right\}, \ \ \tilde{Y}_{0,i,s}^{+}(p) \equiv 1 \left\{ U_{0,i,s} \leq \tilde{p}_{0,i}^{+}(p) \right\}, \\
& & \tilde{\bm{Y}}^{-}_{1,s}(p) \equiv \left( \tilde{Y}_{1,1,s}^{-}(p), \ldots , \tilde{Y}_{1,n_1,s}^{-}(p) \right),  \ \ \tilde{\bm{Y}}^{+}_{1,s}(p) \equiv \left( \tilde{Y}_{1,1,s}^{+}(p), \ldots , \tilde{Y}_{1,n_1,s}^{+}(p) \right), \\
& & \tilde{\bm{Y}}^{-}_{0,s}(p) \equiv \left( \tilde{Y}_{0,1,s}^{-}(p), \ldots , \tilde{Y}_{0,n_0,s}^{-}(p) \right), \ \ \tilde{\bm{Y}}^{+}_{0,s}(p) \equiv \left( \tilde{Y}_{0,1,s}^{+}(p), \ldots , \tilde{Y}_{0,n_0,s}^{+}(p) \right).
\end{eqnarray*}
\item[\textbf{2.}] We approximate $E_{\tilde{\bm{p}}^{-}(p_{1,0}, p_{0,0})} [ \hat{\delta}_{\theta}(\mathbf{D}) ]$ and $E_{\tilde{\bm{p}}^{+}(p_{1,0}, p_{0,0})} [ \hat{\delta}_{\theta}(\mathbf{D}) ]$ by $\pi^{-}_{\theta}(p_{1,0},p_{0,0})$ and $\pi^{+}_{\theta}(p_{1,0},p_{0,0})$, where
\begin{eqnarray*}
\pi^{-}_{\theta}(p_{1,0},p_{0,0}) &\equiv & \frac{1}{S} \sum_{s=1}^S \hat{\delta}_{\theta}( \tilde{\bm{Y}}^{-}_{1,s}(p_{1,0}), \tilde{\bm{Y}}^{+}_{0,s}(p_{0,0})), \\
\pi^{+}_{\theta}(p_{1,0},p_{0,0}) &\equiv & \frac{1}{S} \sum_{s=1}^S \hat{\delta}_{\theta}( \tilde{\bm{Y}}^{+}_{1,s}(p_{1,0}), \tilde{\bm{Y}}^{-}_{0,s}(p_{0,0})).
\end{eqnarray*}
\item[\textbf{3.}] Calculate the maximum regret $\overline{R}(\theta)$, where
\begin{eqnarray*}
\overline{R}(\theta) & \equiv & \max \left[ \max_{p_{1,0}>p_{0,0}} \left\{ (p_{1,0}-p_{0,0}) \left( 1 - \pi^{-}_{\theta}(p_{1,0},p_{0,0}) \right) \right\}, \right. \\
& & \hspace{1.3in} \left. \max_{p_{1,0}<p_{0,0}} \left\{ (p_{0,0}-p_{1,0}) \pi^{+}_{\theta}(p_{1,0},p_{0,0}) \right\} \right].
\end{eqnarray*}
\item[\textbf{4.}] Minimizes $\overline{R}(\theta)$ and obtain the optimal bandwidth $\theta^{\ast}$.
\end{itemize}

This algorithm is advantageous from a computational perspective. Computation of the exact minimax regret rule is often challenging in the context of statistical treatment choices. In situations where the sample size is large, calculating the maximum regret necessitates addressing a substantial-dimensional optimization problem. However, using Theorem \ref{thm:main}, it is possible to calculate the maximum regret with greater ease. In the next section, we compute $\theta^{\ast}$ by using this algorithm.

\begin{Remark}
Similar to \cite{ishihara2021evidence}, when the Lipschitz constant $C$ is unknown, we do not know how to select $C$ in a theoretically justifiable data-driven manner. However, \cite{ishihara2021evidence} and \cite{yata2021optimal} propose some practical choices for $C$. In the empirical application, \cite{ishihara2021evidence} perform leave-one-out
cross-validation to select $C$. \cite{yata2021optimal} estimates a lower bound on $C$ by using the derivative of the conditional mean function. In our setting, we can also apply both of these methods.
\end{Remark}

\begin{Remark}
The maximum regret does not depend on the bandwidth when $C$ is sufficiently large. If $X_{1,0}, X_{1,1}, \ldots, X_{1,n_1}$ and $X_{0,0}, X_{0,1}, \ldots, X_{0,n_0}$ differ, the parameter space $\mathcal{P}$ becomes $[0,1]^{n_1+1} \times [0,1]^{n_0+1}$ when $C$ is large enough. From Theorem \ref{thm:main}, if $p_{1,0} > p_{0,0}$, then the regret of $\hat{\delta}_{\theta}$ is maximized at
\[
\bm{p}_1 = (p_{1,0},0,\ldots,0)' \ \ \text{and} \ \ \bm{p}_0 = (p_{0,0},1,\ldots,1)'.
\]
Similarly, if $p_{1,0} < p_{0,0}$, then the regret of $\hat{\delta}_{\theta}$ is maximized at $\bm{p}_1 = (p_{1,0},1,\ldots,1)'$ and $\bm{p}_0 = (p_{0,0},0,\ldots,0)'$. These results imply that
\[
\max_{\bm{p} \in \mathcal{P}} R(\bm{p},\hat{\delta}_{\theta}) \ = \ \max \left\{ \max_{p_{1,0} > p_{0,0}} (p_{1,0} - p_{0,0}), \max_{p_{1,0} < p_{0,0}} (p_{0,0} - p_{1,0}) \right\} \ = \ 1.
\]
Hence, the maximum regret does not depend on the bandwidth $\theta$.
\end{Remark}

\section{Numerical Examples}

In this section, we perform a numerical analysis to compare the optimal bandwidth explained in the previous section with that of the normally distributed outcomes. Throughout this section, we set the pre-treatment covariates as equidistant grid points on $[-1,1]$:
\begin{eqnarray*}
X_{1,i} &=& -1 + \frac{2(i-1)}{n_1-1} \ \ \ \text{for $i = 1, \ldots, n_1$,} \\
X_{0,i} &=& -1 + \frac{2(i-1)}{n_0-1} \ \ \ \text{for $i = 1, \ldots, n_0$,}
\end{eqnarray*}
where we set $n_1=n_0=n/2$. We consider the kernel regression class $\mathcal{D}$ defined as (\ref{ker_class}), where we use the Gaussian kernel as the kernel function. We calculate the optimal bandwidth $\theta^{\ast}$ by using the proposed method.

We compare our method with bandwidth selection, which minimizes the maximum regret when the outcome variables are normally distributed. Suppose that $Y_{1,i} \sim N(p_{1,i}, 0.5^2)$ and $Y_{0,i} \sim N(p_{0,i}, 0.5^2)$. Then, we set the parameter space as $\mathcal{P}^{N} \equiv \mathcal{P}^N_1 \times \mathcal{P}_0^N$, where
\begin{eqnarray}
\mathcal{P}_1^N &\equiv & \{\bm{p}_1 \in \mathbb{R}^{n_1+1}: |p_{1,i} - p_{1,j}| \leq C \| X_{1,i} - X_{1,j} \| \}, \nonumber \\
\mathcal{P}_0^N &\equiv & \{\bm{p}_0 \in \mathbb{R}^{n_0+1}: |p_{0,i} - p_{0,j}| \leq C \| X_{0,i} - X_{0,j} \| \}. \nonumber
\end{eqnarray}
Define $w_{1,i,\theta} \equiv K(X_{1,i}/\theta) / \sum_{i=1}^{n_1} K(X_{1,i}/\theta)$ and $w_{0,i,\theta} \equiv K(X_{0,i}/\theta) / \sum_{i=1}^{n_0} K(X_{0,i}/\theta)$. Using the argument of \citet{ishihara2021evidence}, the maximum regret can be expressed as follows:
\begin{equation}
\max_{\bm{p} \in \mathcal{P}^N} R(\bm{p},\hat{\delta}_{\theta}) \ = \ s(\theta) \eta \left( \frac{b(\theta)}{s(\theta)} \right),\label{max_reg_normal}
\end{equation}
where $\Phi(\cdot)$ is a distribution function of $N(0,1)$, $\eta(a) \equiv \max_{t>0} \{ t \cdot \Phi(-t+a) \}$, $s(\theta) \equiv 0.5 \sqrt{\sum_{i=1}^{n_1} w_{1,i,\theta}^2 + \sum_{i=1}^{n_0} w_{0,i,\theta}^2}$, and
\[
b(\theta) \equiv C \left( \sum_{i=1}^{n_1} w_{1,i,\theta} \|X_{1,i}\| + \sum_{i=1}^{n_0} w_{0,i,\theta} \|X_{0,i}\| \right).
\]
Appendix 2 provides the details of the derivation. Using this result, we calculate the bandwidth that minimizes the maximum regret for the normally distributed outcomes.

%%% Table 1%%%
\begin{table}[htb]
\caption{Bandwidth choices}
\begin{center} 
   \begin{tabular}{c c c c c c c c c} \hline
      & \multicolumn{2}{c}{$n=10$} & \multicolumn{2}{c}{$n=50$} & \multicolumn{2}{c}{$n=100$} & \multicolumn{2}{c}{$n=200$} \\
      & binary & normal & binary & normal & binary & normal & binary & normal \\ \hline \hline 
      $C=0.1$ & 0.12--0.40 & 0.64 & 0.36 & 0.34 & 0.28 & 0.26 & 0.22 & 0.21  \\ 
      $C=0.2$ & 0.12--0.39 & 0.37 & 0.19 & 0.21 & 0.17 & 0.16 & 0.13 & 0.13  \\ 
      $C=0.3$ & 0.12--0.39 & 0.30 & 0.09 & 0.16 & 0.11 & 0.13 & 0.09 & 0.10  \\ \hline     
  \end{tabular}
\end{center}
\end{table}

Table 1 lists the optimal bandwidth choices for binary and normal cases. In the binary case, we calculate the optimal bandwidth by using the algorithm in Section 3.2. In the normal case, we calculate the bandwidth that minimizes (\ref{max_reg_normal}), that is, the bandwidth choice proposed by \cite{ishihara2021evidence}.

In many cases, the optimal bandwidth decreases as $C$ decreases or $n$ increases. When $n$ is large, the optimal bandwidth for the binary outcomes is close to that for normally distributed outcomes. This phenomenon arises due to the asymptotic normality of the kernel regression estimator. In the binary case, the maximum regret (\ref{Thm_main}) is a step function with respect to $\theta$ and approaches a continuous function as $n$ increases. Hence, Table 1 shows the optimal bandwidth range when $n=10$. When $n$ is small, the optimal bandwidth for the binary outcomes can be significantly different from that for normally distributed outcomes.

\section{Conclusion}

This study investigated whether to treat individuals based on observed covariates. The standard approach to this problem is to use a plug-in rule that determines the treatment based on the sign of an estimate of the CATE. We focused on statistical treatment rules based on nonparametric kernel regression. In situations in which the outcome variables are normally distributed, \citet{ishihara2021evidence} showed that the maximum regret can be calculated easily. This study demonstrated that the maximum regret can be calculated by optimizing two parameters even when the outcome variables are binary. Using these results, we  proposed an optimal bandwidth selection method for the binary outcomes. In addition, we performed a numerical analysis to compare the optimal bandwidth choices for binary and normally distributed outcomes.

\clearpage
\renewcommand{\theequation}{A.\arabic{equation}}
\setcounter{equation}{0}
%%%%%%%%%%%%%%%%%%%%%%%%%%%%%%%%%%%%%%%%%%%
\section*{Appendix 1: Proofs of Theorem \ref{thm:main} and Lemma \ref{lem:s-d}}

\begin{proof}[Proof of Theorem \ref{thm:main}]
If $\bm{p}$ satisfies $p_{1,0} > p_{0,0}$, then we have
\begin{eqnarray*}
R(\bm{p},\hat{\delta}_{\theta}) &=& (p_{1,0}-p_{0,0}) \cdot \left( 1 - E_{\bm{p}} [\hat{\delta}_{\theta}(\mathbf{D}) ] \right).
\end{eqnarray*}
For any $p \in [0,1]$, $\tilde{\bm{p}}_1^{-}(p)$ and $\tilde{\bm{p}}_0^{+}(p)$ are contained in $\mathcal{P}_1$ and $\mathcal{P}_0$, respectively. Additionally, we have $p_{1,i} \geq \tilde{p}_{1,i}^{-}(p_{1,0})$ and $p_{0,i} \leq \tilde{p}_{0,i}^{+}(p_{0,0})$ for all $i$. Because $Ber(p)$ has first-order stochastic dominance over $Ber(\tilde{p})$ for $p \geq \tilde{p}$, it follows from Assumption \ref{ass:class} and Lemma \ref{lem:s-d} that
\begin{equation*}
E_{\bm{p}} [\hat{\delta}_{\theta}(\mathbf{D}) ] \ \geq \ E_{\tilde{\bm{p}}^{-}(p_{1,0}, p_{0,0})} [\hat{\delta}_{\theta}(\mathbf{D}) ].
\end{equation*}
Hence, we obtain
\begin{equation*}
\max_{p_{1,0} > p_{0,0}} R(\bm{p},\hat{\delta}_{\theta}) \ = \ \max_{p_{1,0} > p_{0,0}} \left\{ (p_{1,0}-p_{0,0}) \cdot \left( 1 - E_{\tilde{\bm{p}}^{-}(p_{1,0}, p_{0,0})} [\hat{\delta}_{\theta}(\mathbf{D}) ] \right) \right\}.
\end{equation*}
Similarly, we obtain
\begin{equation*}
\max_{p_{1,0} < p_{0,0}} R(\bm{p},\hat{\delta}_{\theta}) \ = \ \max_{p_{1,0} < p_{0,0}} \left\{ (p_{0,0}-p_{1,0}) \cdot E_{\tilde{\bm{p}}^{+}(p_{1,0}, p_{0,0})} [ \hat{\delta}_{\theta}(\mathbf{D}) ] \right\}.
\end{equation*}
As $p_{1,0}=p_{0,0}$ implies $R(\bm{p},\hat{\delta}_{\theta}) = 0$, we obtain (\ref{Thm_main}).
\end{proof}

%%% Lemma %%%
\begin{Lemma}\label{lem:s-d}
Suppose that $W_1, W_2, \ldots, W_n$, and $\tilde{W}_1$ are independent random variables and $\tilde{W}_1$ has first-order stochastic dominance over $W_1$. If $g(w_1, \ldots, w_n)$ is increasing in $w_1$, then we have $E\left[ g(W_1, W_2, \ldots , W_n) \right] \leq E\left[ g(\tilde{W}_1, W_2, \ldots , W_n) \right]$.
\end{Lemma}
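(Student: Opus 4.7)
The plan is to reduce the multivariate statement to the standard one-dimensional characterization of first-order stochastic dominance via conditioning, and then prove the one-dimensional result by a coupling argument.

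First I would exploit the independence of $\tilde{W}_1$ from $(W_2,\ldots,W_n)$ and of $W_1$ from $(W_2,\ldots,W_n)$ to condition out the variables that are not being compared. Concretely, define
\[
h(w) \;\equiv\; E\bigl[g(w, W_2, \ldots, W_n)\bigr],
\]
so that by Fubini (assuming integrability, which holds in our application since $g$ will be a $\{0,1\}$-valued statistical treatment rule) the two sides of the claimed inequality equal $E[h(W_1)]$ and $E[h(\tilde{W}_1)]$ respectively. Since $g$ is increasing in its first argument, $h$ inherits monotonicity: for $w \leq w'$ we have $g(w, W_2,\ldots,W_n) \leq g(w', W_2,\ldots,W_n)$ pointwise, and taking expectations preserves the inequality. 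Thus it suffices to prove the one-dimensional statement: if $\tilde{W}_1$ first-order stochastically dominates $W_1$ and $h$ is increasing, then $E[h(W_1)] \leq E[h(\tilde{W}_1)]$.

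For the one-dimensional step I would use the quantile coupling. Let $F$ and $\tilde{F}$ denote the CDFs of $W_1$ and $\tilde{W}_1$, and let $F^{-1}$, $\tilde{F}^{-1}$ be their generalized inverses. First-order stochastic dominance means $\tilde{F}(t) \leq F(t)$ for every $t$, which translates into $F^{-1}(u) \leq \tilde{F}^{-1}(u)$ for every $u \in (0,1)$. Drawing a single $U \sim \mathrm{Uniform}(0,1)$ and setting $W_1^{\ast} \equiv F^{-1}(U)$, $\tilde{W}_1^{\ast} \equiv \tilde{F}^{-1}(U)$, one obtains $W_1^{\ast} \overset{d}{=} W_1$, $\tilde{W}_1^{\ast} \overset{d}{=} \tilde{W}_1$, and $W_1^{\ast} \leq \tilde{W}_1^{\ast}$ almost surely. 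Applying the increasing function $h$ and taking expectations then yields $E[h(W_1)] = E[h(W_1^{\ast})] \leq E[h(\tilde{W}_1^{\ast})] = E[h(\tilde{W}_1)]$.

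Combining the two steps gives $E[g(W_1, W_2, \ldots, W_n)] = E[h(W_1)] \leq E[h(\tilde{W}_1)] = E[g(\tilde{W}_1, W_2, \ldots, W_n)]$, which is the claim. There is no real obstacle here; the only mild subtleties are (i) ensuring measurability of $h$, which follows from a standard Fubini argument since the joint distribution of $(W_2, \ldots, W_n)$ does not depend on the first coordinate, and (ii) integrability, which is immediate in the paper's application where $g$ is an indicator of an event. If one wants to avoid the quantile coupling entirely, an alternative is to integrate by parts: writing $E[h(\tilde{W}_1)] - E[h(W_1)] = \int [F(t) - \tilde{F}(t)]\,dh(t)$ for increasing $h$ (via a Stieltjes integration-by-parts on a bounded interval, or a truncation argument on $\mathbb{R}$) shows nonnegativity directly from $F \geq \tilde{F}$.
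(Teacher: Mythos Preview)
Your proposal is correct and follows essentially the same route as the paper: both use independence and Fubini to reduce to the one-dimensional first-order stochastic dominance inequality, with the only cosmetic difference being that the paper integrates out $w_1$ first for fixed $(w_2,\ldots,w_n)$ while you integrate out $(W_2,\ldots,W_n)$ first to form $h$. The paper simply invokes the one-dimensional inequality $\int g\,dF_1 \leq \int g\,d\tilde{F}_1$ as known, whereas you additionally supply a proof of it via quantile coupling, so your write-up is in fact more complete.
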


\begin{proof}
Let $F_i$ be a distribution function of $W_i$ and $\tilde{F}_1$ be a distribution function of $\tilde{W}_1$. Because $g(w_1, \ldots, w_n)$ is increasing in $w_1$, we have $\int g(w_1, \ldots, w_n) dF_1(w_1) \leq \int g(w_1, \ldots, w_n) d\tilde{F}_1(w_1)$. Hence, we obtain 
\begin{eqnarray*}
E\left[ g(W_1, W_2, \ldots , W_n) \right] &=& \int \cdots \int g(w_1, \ldots, w_n) dF_1(w_1) \cdots dF_n(w_n) \\
& \leq & \int \cdots \int g(w_1, \ldots, w_n) d\tilde{F}_1(w_1) \cdots dF_n(w_n) \\
&=& E\left[ g(\tilde{W}_1, W_2, \ldots , W_n) \right].
\end{eqnarray*}
\end{proof}

\section*{Appendix 2: Derivation of (\ref{max_reg_normal})}

In this section, we present the derivation of (\ref{max_reg_normal}). Suppose that $Y_{1,i} \sim N(p_{1,i}, 0.5^2)$ and $Y_{0,i} \sim N(p_{0,i}, 0.5^2)$. Then, we set the parameter space as $\mathcal{P}^{N} \equiv \mathcal{P}^N_1 \times \mathcal{P}_0^N$, where
\begin{eqnarray}
\mathcal{P}_1^N &\equiv & \{\bm{p}_1 \in \mathbb{R}^{n_1+1}: |p_{1,i} - p_{1,j}| \leq C \| X_{1,i} - X_{1,j} \| \}, \nonumber \\
\mathcal{P}_0^N &\equiv & \{\bm{p}_0 \in \mathbb{R}^{n_0+1}: |p_{0,i} - p_{0,j}| \leq C \| X_{0,i} - X_{0,j} \| \}. \nonumber
\end{eqnarray}
Define $w_{1,i,\theta} \equiv K(X_{1,i}/\theta) / \sum_{i=1}^{n_1} K(X_{1,i}/\theta)$ and $w_{0,i,\theta} \equiv K(X_{0,i}/\theta) / \sum_{i=1}^{n_0} K(X_{0,i}/\theta)$. Then the regret can be written as
\begin{eqnarray*}
R(\bm{p}, \hat{\delta}_{\theta}) &=& (p_{1,0}-p_{0,0}) \cdot \left\{ 1\{p_{1,0} \geq p_{0,0} \} - \Phi \left( \frac{\sum_{i=1}^{n_1} w_{1,i,\theta} p_{1,i} - \sum_{i=1}^{n_0} w_{0,i,\theta} p_{0,i} }{s(\theta)} \right) \right\} \\
&=& |p_{1,0}-p_{0,0}| \cdot \Phi \left( - sgn(p_{1,0}-p_{0,0}) \cdot \frac{\sum_{i=1}^{n_1} w_{1,i,\theta} p_{1,i} - \sum_{i=1}^{n_0} w_{0,i,\theta} p_{0,i} }{s(\theta)} \right).
\end{eqnarray*}
From the symmetry of $\mathcal{P}^N$, we have
\begin{eqnarray*}
& & \max_{\bm{p} \in \mathcal{P}^N} R(\bm{p}, \hat{\delta}_{\theta}) \\
&=& \max_{t > 0} \max_{\substack{\bm{p} \in \mathcal{P}^N \\ p_{1,0} - p_{0,0} = t}} t \cdot \Phi \left( - \frac{t + \sum_{i=1}^{n_1} w_{1,i,\theta} (p_{1,i}-p_{1,0}) - \sum_{i=1}^{n_0} w_{0,i,\theta} (p_{0,i} - p_{0,0})}{s(\theta)} \right).
\end{eqnarray*}
From the definitions of $\mathcal{P}_1^N$ and $\mathcal{P}_0^N$, we obtain
\begin{eqnarray*}
& & - \min_{\substack{\bm{p} \in \mathcal{P}^N \\ p_{1,0} - p_{0,0} = t}} \left\{ \sum_{i=1}^{n_1} w_{1,i,\theta} (p_{1,i}-p_{1,0}) - \sum_{i=1}^{n_0} w_{0,i,\theta} (p_{0,i} - p_{0,0}) \right\} \\
&=& \max_{\substack{\bm{p}_1 \in \mathcal{P}_1^N \\ p_{1,0}=0}} \left\{ \sum_{i=1}^{n_1} w_{1,i,\theta} p_{1,i} \right\} + \max_{\substack{\bm{p}_0 \in \mathcal{P}_0^N \\ p_{0,0}=0}} \left\{ \sum_{i=1}^{n_0} w_{0,i,\theta} p_{0,i} \right\} \ = \ b(\theta).
\end{eqnarray*}
Hence, we obtain (\ref{max_reg_normal}). This result is essentially the same as that in \citet{ishihara2021evidence}.

\bigskip

\clearpage
\bibliographystyle{ecta}
\bibliography{meta-analysis}

\end{document}